\documentclass[sigconf]{}

\AtBeginDocument{%
  \providecommand\BibTeX{{%
    \normalfont B\kern-0.5em{\scshape i\kern-0.25em b}\kern-0.8em\TeX}}}

\usepackage{booktabs} 
\usepackage{algorithm}
\usepackage{algorithmic}
\usepackage{caption}
\usepackage{subcaption}

\acmConference[Arxiv'22] {Advertising Conference}{2022}{Silicon Valley, CA, USA}



\begin{document}

\title{Do not Waste Money on Advertising Spend:  Bid Recommendation via Concavity Changes}

\author{{Deguang Kong}{*}, Konstantin Shmakov and Jian Yang} 
\authornote{This work was done during the authors were working at Yahoo Research.  The
correspondence should be addressed to doogkong@gmail.com.  The views and conclusions contained in
this document are those of the author(s) and should not be
interpreted as representing the official policies, either
expressed or implied, of any companies. }
\affiliation{%
  \institution{Yahoo Research,   San Jose, California, U.S.A, 94089}
  }

  \email{ doogkong@gmail.com, kshmakov@yahooinc.com,  jianyang@yahooinc.com}

\begin{abstract}

In computational advertising, a challenging problem is how to recommend the bid for advertisers to achieve the best return on investment (ROI) given budget constraint. This paper presents a bid recommendation scenario that discovers the concavity changes in click prediction curves. The recommended bid is derived based on the turning point from  significant increase (i.e. concave downward) to slow increase (convex upward). Parametric learning based method is applied by solving the corresponding constraint optimization problem. 
%
%
Empirical studies on real-world advertising scenarios clearly demonstrate the performance gains for business metrics (including revenue increase, click increase and advertiser ROI increase). 
\end{abstract}

\maketitle

\section{Introduction}
In computational advertising business, 
the publishers sell the ad inventories to the advertisers who win the auctions in ad exchange based on generalized second-price (GSP) ~\cite{10.1257/aer.97.1.242}
in the non-guaranteed delivery of ad\footnote{ {``Ad'' and ``advertisement'' are used interchangeably with the same meaning.} } in sales channel. 
The actual delivery of ad impressions depends on the bidding auctions of advertisers on demand side  as well as the total number of market supplies provided by the publishers. 
In online advertising, one major complaint of advertisers is: 
\emph{
``Half the money I spend on advertising is wasted; the trouble is I don't know which half\footnote{The original words are from John Wanamaker, who is  a proponent of advertising and a pioneer in marketing.}. "  
}
As an advertiser, one always needs to consider the spending question, such as ``am I spending enough on advertising?"  and ``am I spending too much on advertising" ?  On one hand, the advertiser will miss the opportunities if one bids at lower price. On the other hand, the advertiser will throw advertising dollars if one spends too much. Therefore, it is important to forecast how the advertiser performance looks like at different bid prices, making it easier for advertisers to make informed decision and minimize the risks of ad spending. 
\begin{figure}[t]
	\centering
	\includegraphics[height=1.45in,width=3.3in]{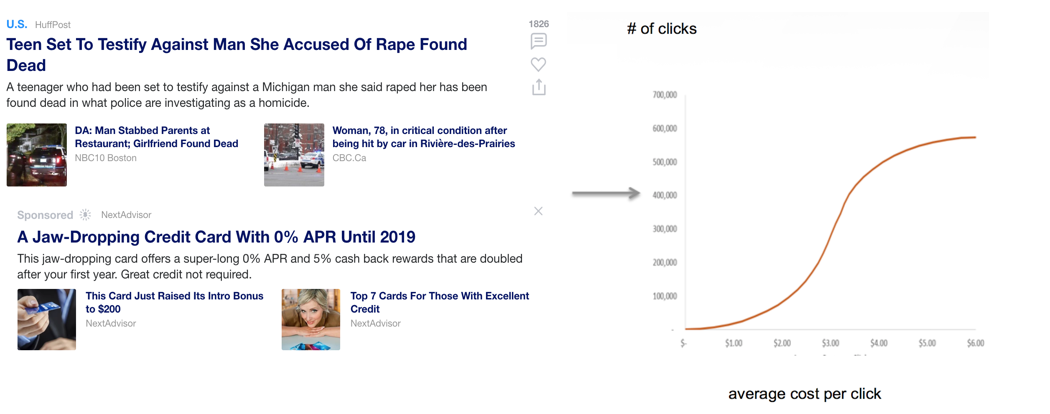}
	\caption{\small Market behavior of advertisers: number of clicks vs. average cost per click.}
	\label{fig:ads_plot}
\end{figure}

One key observation in ad market is:  advertisers' return on investment has non-linearity characteristics due to diminishing returns after the advertiser's bid reaches to a certain point which in fact leads to saturation effect.  For example, if the advertiser bids at \$3.60, it may get around 450k clicks; however, if the advertiser increases the bid to \$6.00, the true clicks are only 600k which is far less than two thirds more than the clicks at \$3.60 (shown in Fig.~\ref{fig:ads_plot}). The diminishing return is known as 
``saturation effect'' due to the reach of the limit of click where even higher bidder can have very minor incremental effect.  

Motivated from the above observations, in this paper, we propose a logistic growth inflection model  for bid optimization.  In this new strategy, the bid optimization aims to achieve the maximum return increase using the predicted bid landscape models, which aims to discover the significant change (a.k.a a turning point) in the key performance indicator (KPI) function at which the sign of the curvature changes.  Logistic growth function is used to capture the non-linear non-decreasing relations between KPI (e.g., number of clicks) and bid price. %
In particular, at low levels of ad bid, there is very little click response increase. As bid continues to increase, there is more significant positive growth of clicks until an inflection point. After that, an incremental dollar of bid increase will not produce the same increase of clicks any more due to the reaching of the stage of diminishing returns.  %
The designed new bid optimization strategy can be applied to plan the campaigns for advertisers to gain valuable insights, as well as providing informed decisions for real-time bidding  suggestions. 
%
To summarize, this paper makes the following contributions. 

$\bullet$ To maximize the return of click increases over each dollar spent increase on the ads, we propose a logistic growth inflection model to model the relations between number of clicks and bid price for bid recommendation. To the best of our knowledge, this is the \emph{first} work that proposes logistic growth model to solve the bid optimization problem. 

$\bullet$ To effectively solve the resultant optimization problem, we use parametric learning to fit the logistic growth function giving he sparse and noisy observations by minimizing the least-square errors, and then we solve the optimization problem with guaranteed global optimal solution. 


$\bullet$ Extensive experiment demonstrates the effectiveness of achieving the maximum return of click increase over the same bid increase, and it has clear advantages over other bidding strategies. The proposed bid recommendation scenario brings in 15.37\% bid increase,  30.24\% click increase and 14.50\% ROI increase over the baselines in empirical online test. 


\section{Problem Statement}
\label{sec:ps}



{\bf Background of Advertising modeling}. In computational advertising, an advertiser bids at certain prices to win impressions in auctions and therefore getting clicks after display.  The advertisers need to pay effective-cost-per-mille
(eCPM cost) or cost-per-click (CPC cost) depending on the campaign type. In particular, each advertiser has an advertiser profile (denoted as "ad profile" in the paper next) that includes ad text description, time, ad-group id, campaign id, {\it etc},   the advertiser's performance can be viewed as a mapping  from advertiser profile and bid price (such as \$2, \$3) to key performance indicators (a.k.a KPI) including number of clicks,  number of user, total amount of spend the advertiser may have, {\it i.e.,}
\begin{eqnarray}
\begin{array}{l}
\text{(ad profile, bid)} \rightarrow \text{(\#click,  \#impression,  \#spend,  \#user)}
\end{array}
\end{eqnarray}
Table~\ref{tabel:notation} summarizes the notations used in this paper. 


{\small 
\begin{table}[t]
\caption{Notation and explanations}
\label{tabel:notation}
\begin{center}
\begin{tabular}{c|c}
\hline \hline  
Notation & Explanations \\
\hline \hline
eCPM cost	& effective cost per thousand impressions \\
\hline
winning rate & the probability of winning impressions \\
\hline
CPM bid    & bid price to win per thousand impressions \\
\hline
CPC bid    & bid price to win per click \\
\hline
click(bid) & number of clicks to win if one bids at this price \\
\hline
impression(bid) & number of impressions to win if one bids at this price \\
\hline
Spend(bid) & total amount of cost if one bids at this price \\
\hline \hline
\end{tabular}
\end{center}
\end{table}
}

\begin{figure}[t]
	\centering
	\includegraphics[height=1.35in, width=3.5in]{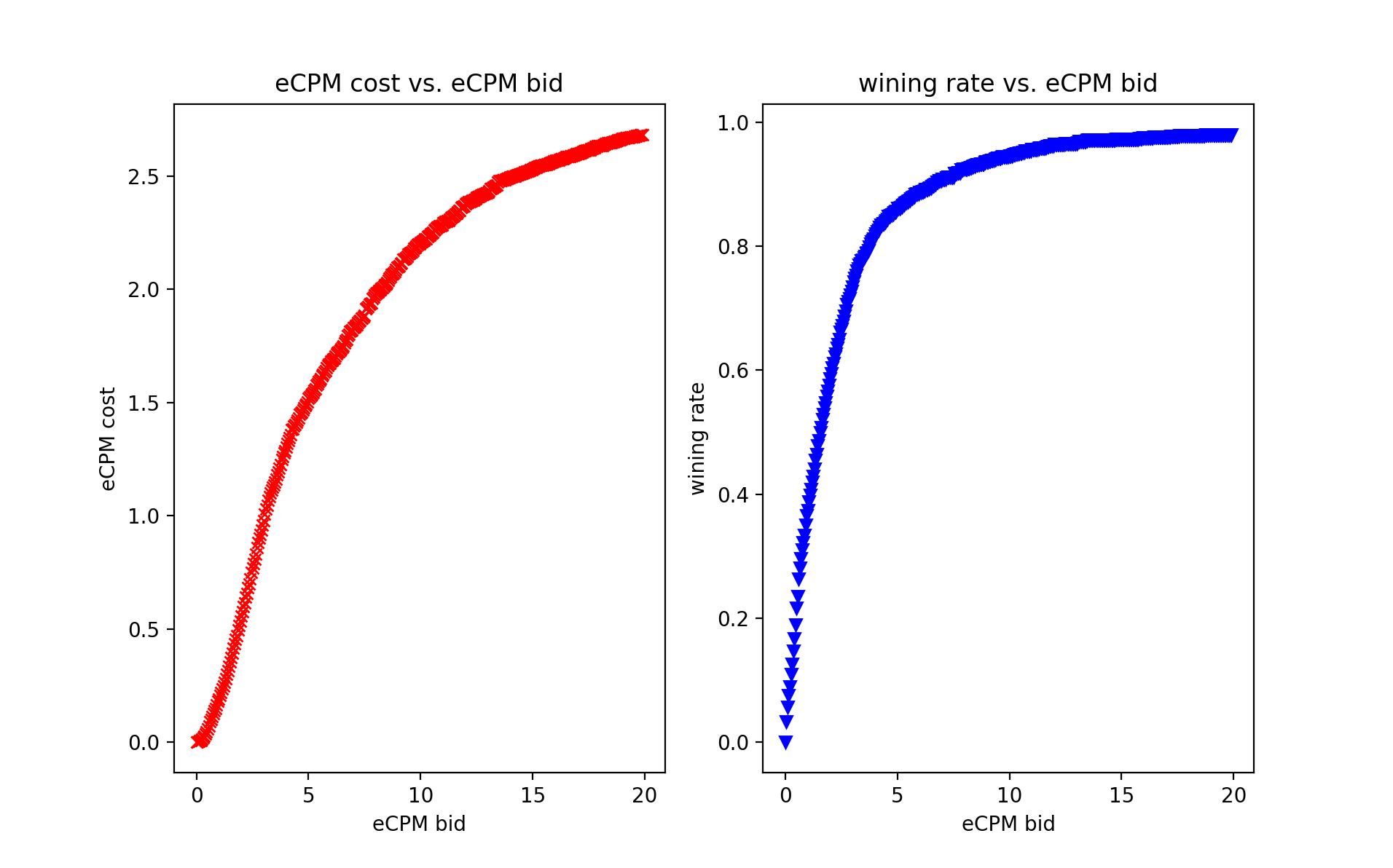} 
	\caption{\small Bid landscape model: (left) eCPM cost vs. eCPM bid; 
(right) win-rate vs. eCPM bid. }
	\label{fig:BL}
\end{figure}

\begin{figure*}[t]
	\centering
	\includegraphics[height=1.3in,width=5in]{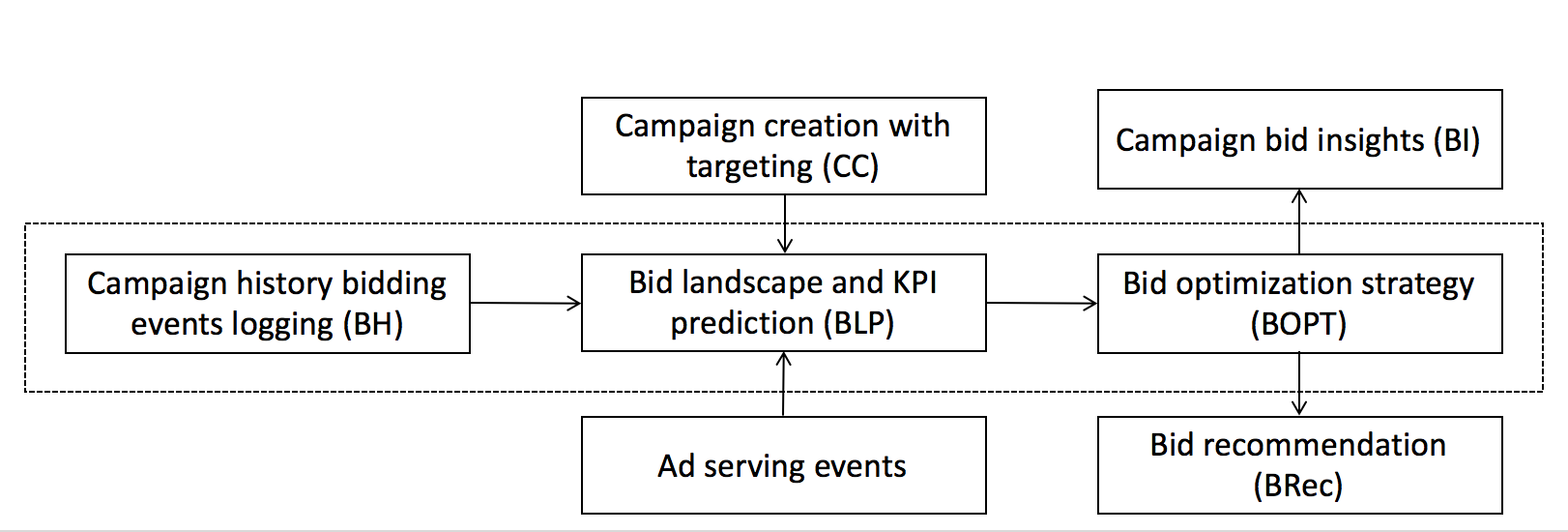}
	\caption{\small Workflow of bid optimization in advertising system.}
	\label{fig:flow}
\end{figure*}

The bid landscape model estimates the performances of the advertisers based on profiles  using history observations, which generally includes the win rate  and eCPM cost at different eCPM bid prices.  In particular, win rate reflects the percentage of winning impressions at given eCPM bid price while eCPM cost estimates the true cost at given eCPM bid price
,  which can be modeled as non-linear functions $\hat{f}$ and $\hat{g}$ respectively,  {\it i.e.,}
\begin{eqnarray}
\label{EQ:BL}
\begin{array}{l}
\label{EQ:f_winrate}
\;\;\text{Win rate} =\hat{ f} (\text{bid, ad profile}), 
\label{EQ:g_ecpm}
\;\; \text{eCPM cost} = \hat{g} (\text{bid, ad profile}).
\end{array}
\end{eqnarray}

As an example, 
Fig.~\ref{fig:BL} gives a bid landscape model generated for an advertiser profile. One important property of win rate and eCPM\_cost functions is non-decreasing monotonic property, {\it i.e.,} as the bid price increases, the true number of winning impression and  
eCPM\_cost generally increase as well. In Fig.~\ref{fig:BL}, at bidding price of eCPM \$5.00, the eCPM cost is \$1.50 and win rate is $81\%$, which indicates that the advertiser can win $81\%$ of traffic if one bids at this price. 
Win rate function $f(.)$ and eCPM cost function $g(.)$ are estimated based on generalized second-price (GSP)~\cite{NBERw11765} mechanism in auction.  After each bidder places a bid, the one with the highest bidding gets the first slot, and the second-highest wins the second slot and so on.  %
The one with the highest bidding, according to GSP, only needs to pay for the price of the second-highest bidder. The winner's eCPM cost $\hat{g}(.)$ for $i$-th position, therefore, is the successful bidder for $(i+1)$-th position\footnote{We note that there are other models used for estimating the bid landscape of advertiser, such as non-linear  
regression by gradient boosting decision trees~\cite{Cui:2011:BLF:2020408.2020454}, {\it etc}. Our model is faithful to the campaign data in observations. 
}.

Given the bid landscape of advertiser profile $j$, we will have the estimated number of clicks, the amount of cost and the amount of spend, {\it i.e.,}  
\begin{eqnarray}
\label{EQ:click_spend_cost}
\begin{array}{l}
\text{ \textcircled{a}  \;\;  Click(bid) = {\rm{Impression}} }  
\times  \text{ Winrate(bid)} \times  \text{CTR };  \\
\text{\textcircled{b}  \;\; Spend(bid)} =  \text{ Click(bid)}  \times \frac{  \text{eCPM\_{\rm{cost}}(bid)}} {1000   \times  \text{CTR}}; 
 \end{array}
\end{eqnarray}
where $\text{impression}$ is the total number of impressions the advertiser can win from the supply side (offered by publishes based on advertiser profile), CTR is the predicted click-through-rate depending on campaign, $\text{Click(bid)}$ is the number of winning clicks at price $bid$, and $\text{winrate(bid)}$ is the winning rate at price $bid$\footnote{For notation simplicity purpose, we ignore the advertiser profile index $j$ in the paper next. All discussions are based on a particular advertiser profile $j$.}.

\subsection{Use Case of Bid optimization}. Advertisers typically use different biding strategies to achieve the desired goals. 
If an advertiser is
mostly interested in getting brand name or logo in front of lots of people, then cost-per-thousand viewable impressions (vCPM) bidding strategy is a good choice since the advertiser pays for every 1,000 times the  ad appears and is viewable. 
In the typical advertising system,  advertising platform logs all campaign history auction events (BH module in Fig.\ref{fig:flow}), and then bid landscape model and key performance indicator (KPI) models (BLP module in Fig.\ref{fig:flow}) such as click-through-rate (CTR)~\cite{kongctr}, conversion rate (CVR) prediction modules are built using history observations.  After that the bid optimization strategy module (BOPT module in Fig.\ref{fig:flow}) is responsible for providing the optimized bidding strategy to advertisers. 
The bid optimization strategy plays the important roles in the following use cases (shown in Fig.\ref{fig:flow}):

$\bullet$ An advertiser wants to create a new advertising campaign and aims to find the answers to the questions such as ``\emph{what price should I bid  to achieve the best performance?}"  After running bid optimization module, we can provide insight and bid suggestions to the advertiser. 

$\bullet$ For real-time ad serving events on DSP side,  the advertiser wants to know "\emph{How much should I bid to win ad inventory on publishers?}" After running the bid optimization module, we provide the bid recommendations for advertisers to win impressions. 


\subsection{Optimization Objective}

For an advertiser, in order to maximize the returns for the amount of spend, the bid recommendation method aims to recommend the bid to meet this criterion. In particular, in the paper next, all the models are performed at \emph{advertiser profile} level\footnote{{In the paper next, ``advertiser level'' and ``advertiser profile level'' are used interchangeably, which means the same.}}. 
We aim to find the best bid price which provides the maximum click increase $d\text{Click}$ over the cost increase $d\text{Cost}$ for an advertiser, {\it i.e.,} 
\begin{eqnarray}
\label{EQ:click_cost}
\max_\text{bid}  \frac{d\text{Click(bid)}} {d\text{Cost(bid)} }, 
\end{eqnarray}
 \begin{eqnarray}
d\text{ h(bid)}  := \Delta\text{h(bid)} = \text{h}(\text{bid}+\Delta{\text{bid}}) - \text{h}(\text{bid}) ,
\end{eqnarray}
where function $h$ could be the number of click, the amount of cost, and the amount of spend, etc.  %
If we use click increase $d\text{Click}$ over spend increase $d\text{Spend}$ as the criterion, the optimal goal becomes:  
\begin{eqnarray}
\label{EQ:click_spend}
\max_\text{bid}  \frac{d\text{Click(bid)}} {d\text{Spend(bid)} }. 
\end{eqnarray}
An alternative  is using revenue increase $d\text{Revenue(bid)}$ over spend increase $d\text{Spend(bid)}$ as the criterion, 
i.e.,
\begin{eqnarray}
\label{EQ:revenue_spend}
\max_\text{bid}  \frac{d\text{Revenue}} {d\text{Spend} }.
\end{eqnarray}
In fact, the revenue depends on many factors such conversion values and numbers 
\footnote{In practice, some campaigns may not have clear conversion goals and ROI settings.}. In this paper we focus on solving Eq.(\ref{EQ:click_cost}). Eq.(\ref{EQ:click_spend}) and Eq.(\ref{EQ:revenue_spend}) can be similarly solved.

  

Now we are ready to model click increase over cost increase at advertiser level.  Here we adopt the \text{eCPM\_cost} as described above and consider the budget constraint from the advertiser. Therefore Eq.(\ref{EQ:click_cost}) becomes:
\begin{eqnarray}
\label{EQ:click_cpm_cost}
\max_\text{bid}  \frac{d\text{Click(bid)}} {d\text{ eCPM\_Cost(bid)} } \;\;\; 
s.t. \;\;\; \text{Spend(bid)} \leq \text{budget} 
\end{eqnarray}

This idea is not only applied for CPM campaigns, but also can be equivalently applied for CPC campaigns and tCPA campaigns given different ad formats and serving channels. An example for optimization of CPC campaign is: 

\begin{eqnarray}
\label{EQ:click_cpc_cost}
\max_\text{bid}  \frac{d\text{Click(bid)}} {d \text{CPC\_Cost(bid)} }  \;\;\; 
s.t. \;\;\; \text{Spend(bid)} \leq \text{budget} 
\end{eqnarray}

It is easy to see the optimal solution 
$\text{bid}^*_\text{CPC}$ of Eq.(\ref{EQ:click_cpc_cost}) can be easily obtained from the optimal solution
$\text{bid}^*_\text{eCPM}$ of Eq.(\ref{EQ:click_cpm_cost}), {\it i.e.,}
\begin{eqnarray}
\text{bid}^*_\text{CPC} = \frac{\text{bid}^*_\text{eCPM} } {1000 \times  \text{CTR}}
\end{eqnarray}

{\bf Concavity Changes}
Actually the optimization of Eq.(\ref{EQ:click_cpm_cost}) aims to find the  
a point where the concavity changes (from up to down).
In general, the tangent line to a graph at concavity change point must cross the graph at that point. Given that the sign of the slope is positive for ad bid optimization of  Eq.(\ref{EQ:click_cpm_cost}), the slopes of the tangent are decreasing as we move from left to right when the graph is concave down and increasing (from left to right) when it is concave up. 

\subsection{Variant Analysis} 

A variant of optimization goal of Eq.(\ref{EQ:click_cpm_cost}) is to consider the scale of current number of clicks and eCPM cost because the incremental increase of  
$\Delta_\text{Click(bid)}$ actually depends on the current number of clicks, i.e.,  ``normalized click vs. cost increase objective": 
\begin{eqnarray}
\label{EQ:click_cpm_cost_norm}
\max_\text{bid}  \frac{  
\frac{d\text{Click(bid)}} 
{\text{Click(bid)}}
}  
{ 
\frac{d \text{ eCPM\_Cost(bid)}}
  {\text{eCPM\_Cost(bid)}} 
} \;\;\;  
s.t. \;\;\; \text{Spend(bid)} \leq \text{budget} 
\end{eqnarray}

By considering click and spend scale, similarly, 
Eq.(\ref{EQ:click_spend})  becomes:  
\begin{eqnarray}
\label{EQ:click_spend_normalization}
\max_\text{bid} 
 \frac{
 \frac{d \text{Click(bid)}}
 {\text{Click(bid)}}
 }
  {
  \frac{ d\text{Spend(bid)} }
  { \text{Spend(bid)}}
  };   \;\;\;   
s.t. \;\;\; \text{Spend(bid)} \leq \text{budget}. 
\end{eqnarray}

\vspace{-3mm}
\begin{theorem}
Let $$ \alpha_{bid} :=  \frac{  
\frac{d \text{Click(bid)}} 
{\text{Click(bid)}}
}  
{ 
\frac{d \text{ eCPM\_Cost(bid)}}
  {\text{eCPM\_Cost(bid)}} 
};  
\\
\beta_{bid} := \frac{  
\frac{d \text{Click(bid)}} 
{\text{Click(bid)}}
}  
{ 
\frac{d \text{Spend(bid)}}
  {\text{Spend(bid)}} 
}, 
\\
\gamma_{bid} := \frac{d \text{Click(bid)}} {d \text{eCPM\_Cost(bid)} } , 
$$
we have, 

\textcircled{1}  $ \alpha_{bid} = \frac{\beta_{bid}}{1-\beta_{bid}}$, 

\textcircled{2}  $0 < \beta_{bid} < 1$,

\textcircled{3}  $\gamma_{bid} \geq  \frac{\alpha_{bid} M}{ bid } \hat{f}(bid)$, where $M$ is a constant given by $M = Impression/1000$ ( impression is determined by supplies 
and CTR is determined by the particular ad profile) and  ${\hat{f} (bid)}$ is the win rate curve as defined in Eq.(\ref{EQ:BL}). 
\end{theorem}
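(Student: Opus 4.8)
The whole statement rests on one structural identity coming from Eq.~(\ref{EQ:click_spend_cost})\,\textcircled{b}: since $\text{CTR}$ is fixed at the advertiser-profile level, $\text{Spend}(bid)=\frac{1}{1000\,\text{CTR}}\,\text{Click}(bid)\cdot\text{eCPM\_Cost}(bid)$ is a constant multiple of the product $\text{Click}(bid)\cdot\text{eCPM\_Cost}(bid)$. Taking the logarithmic differential of this product relation gives
\[
\frac{d\text{Spend}(bid)}{\text{Spend}(bid)}
=\frac{d\text{Click}(bid)}{\text{Click}(bid)}
+\frac{d\text{eCPM\_Cost}(bid)}{\text{eCPM\_Cost}(bid)} .
\]
Write $x:=\frac{d\text{Click}(bid)}{\text{Click}(bid)}$ and $y:=\frac{d\text{eCPM\_Cost}(bid)}{\text{eCPM\_Cost}(bid)}$, so that $\alpha_{bid}=x/y$ and, by the identity just displayed, $\beta_{bid}=x/(x+y)$.

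For \textcircled{1} this is now pure algebra: $\frac{\beta_{bid}}{1-\beta_{bid}}=\frac{x/(x+y)}{y/(x+y)}=\frac{x}{y}=\alpha_{bid}$. For \textcircled{2} I would invoke the non-decreasing monotonic property of $\text{Click}(\cdot)$ and $\text{eCPM\_Cost}(\cdot)$ recorded after Eq.~(\ref{EQ:BL}) (Click is increasing because the win rate $\hat f$ is): on the relevant operating range both curves are strictly increasing, hence $x>0$ and $y>0$, and therefore $0<\frac{x}{x+y}<1$, i.e.\ $0<\beta_{bid}<1$. Note this also yields $\alpha_{bid}=\frac{\beta_{bid}}{1-\beta_{bid}}>0$, which is needed below.

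For \textcircled{3} the first step is to rewrite $\gamma_{bid}$ through $\alpha_{bid}$: directly from the definitions, $\gamma_{bid}=\frac{d\text{Click}(bid)}{d\text{eCPM\_Cost}(bid)}=\alpha_{bid}\cdot\frac{\text{Click}(bid)}{\text{eCPM\_Cost}(bid)}$. Substituting $\text{Click}(bid)=\text{Impression}\cdot\hat f(bid)\cdot\text{CTR}$ from Eq.~(\ref{EQ:click_spend_cost})\,\textcircled{a} and $M=\text{Impression}/1000$, the target inequality $\gamma_{bid}\ge\frac{\alpha_{bid}M}{bid}\hat f(bid)$ becomes, after cancelling the positive factor $\alpha_{bid}$ (using \textcircled{2}) and the common factor $\text{Impression}\cdot\hat f(bid)$, the scalar inequality $1000\,\text{CTR}\cdot bid\ \ge\ \text{eCPM\_Cost}(bid)$, equivalently $\text{Spend}(bid)\le\text{Click}(bid)\cdot bid$. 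This last inequality says the realized average price paid per click never exceeds the submitted per-click bid, which is exactly the second-price (GSP) charging rule recalled around Eq.~(\ref{EQ:BL}).

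The only real subtlety — and the step I expect to be the main obstacle — is closing \textcircled{3} cleanly: one must keep the units consistent, reading $bid$ in the CPC convention in which $\text{Spend}=\text{Click}\times\text{CPC\_cost}$ with $\text{CPC\_cost}\le\text{CPC\_bid}$, and passing to the eCPM version via $\text{CPC\_cost}=\text{eCPM\_Cost}/(1000\,\text{CTR})$; and one must use the sign information $\alpha_{bid}\ge 0$ from \textcircled{2} so that multiplying the reduced inequality $\frac{\text{Click}(bid)}{\text{eCPM\_Cost}(bid)}\ge\frac{M\hat f(bid)}{bid}$ through by $\alpha_{bid}$ does not reverse it. Everything else is bookkeeping with the differentials $x$ and $y$.
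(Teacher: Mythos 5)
Your proposal is correct and follows essentially the same route as the paper: part \textcircled{1} via the product rule on $\text{Spend}\propto\text{Click}\cdot\text{eCPM\_Cost}$ (your logarithmic differential is the paper's Eq.~(\ref{EQ:dspend}) divided by $\text{Spend}$), part \textcircled{2} from monotonicity of the two curves, and part \textcircled{3} by writing $\gamma_{bid}=\alpha_{bid}\,\text{Click}/\text{eCPM\_Cost}$ and invoking the GSP fact $\text{CPC\_cost}\le \text{bid}$. The unit subtlety you flag (reading $bid$ in the CPC convention) is exactly the implicit step in the paper's chain $\text{eCPM\_cost}=1000\cdot\text{CTR}\cdot\text{CPC\_cost}$, so nothing is missing.
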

\vspace{-3mm}
\begin{proof}
Notice $\alpha_{bid}, \beta_{bid}$,  $\gamma_{bid}$, $spend(bid)$,  $eCPM\_cost(bid)$,  $Click(bid)$ are all functions of $bid$, for notation simplicity, we omit bid in 
the following proof. 

\textcircled{1} Notice that
\begin{eqnarray}
\label{EQ:spend}
Spend = \frac{eCPM\_cost}{1000 \times CTR} \times Click, 
\end{eqnarray}
 \begin{eqnarray}
\label{EQ:dspend}
dSpend = L \times \Big( d eCPM\_cost \times Click  +  dClick \times eCPM\_cost \Big),
\end{eqnarray}
where $L =1/ (1000\times CTR)$.
Substitute $Spend$ of Eq.(\ref{EQ:spend}) and $dSpend$ of Eq.(\ref{EQ:dspend}) into $\beta_{bid}$, we can get 
\begin{eqnarray}
\label{EQ:beta_bid=}
\beta_{bid} &:=&  \frac{  
\frac{d\text{Click}} 
{\text{Click}}
}  
{ 
\frac{d\text{Spend}}
  {\text{Spend}} 
}  =  \frac{d{\text{Click}}}  { d\text{Spend}}  \frac{\text{Spend}}{\text{Click}} \nonumber \\
& =&   \frac{d \text{click}  \times \text{eCPM\_cost} } { d \text{eCPM\_cost}  \times \text{Click} + d \text{Click}  \times \text{eCPM\_cost}} \nonumber \\
& =&  \frac {  \frac{d\text{Click}} {d \text{eCPM\_cost}}  \frac{\text{eCPM\_cost}}{\text{Click}} 
    } 
{1 +   \frac{d\text{Click}} {d \text{eCPM\_cost}}    \frac{\text{eCPM\_cost}}{\text{Click}}    } \nonumber \\
&=& \frac{\alpha_{bid}}{ 1 + \alpha_{bid}}. 
\end{eqnarray}

\textcircled{2}  Given the fact that the higher bid always gives the larger win rate and the higher eCPM\_cost in bid landscapes,  from Eq.(\ref{EQ:click_spend_cost}), one can tell $Click(bid)$ and $Spend(bid)$ are both monotonically non-decreasing functions with respect to bid due to the non-decreasing property of win rate and eCPM\_cost functions.  Therefore,  
$d\text{Spend} \geq 0 $ and  $d \text{Click} \geq 0$,  where the equality holds when the current spend reaches to the budget. Even if one increases the bid, the spend and click will not increase anymore.  Therefore, $\beta_{bid} > 0$.  From Eq.(\ref{EQ:beta_bid=}), one can tell $\beta_{bid} < 1$.  

\textcircled{3} Notice that
\begin{eqnarray}
\label{EQ:click+}
 Click = Impression \times CTR \times win rate,  \;\; 
  \frac{CPC\_cost} {bid} \leq 1. \nonumber
\end{eqnarray}
We have,  
\begin{eqnarray}
\label{EQ:gamma_bid}
\gamma_{bid} &:=&  \alpha_{bid} \times \frac{\text{Click}}{\text{eCPM\_cost}}  \nonumber \\
&=&  \alpha_{bid} \frac{ Impression \times CTR \times win rate } { 1000 \times  \text{CPC\_cost} \times \text{CTR} } \nonumber \\
&=& \alpha_{bid} \frac{M \times win rate}{CPC\_cost}  \nonumber \\
&\geq&   \frac{\alpha_{bid} \times M}{bid}  \hat{f}(bid).
\end{eqnarray}
This completes the proof. 
\end{proof}

{\bf Discussion} 
The major drawback of $\alpha_{bid}$ and $\beta_{bid}$ is the ignorance of the impact of absolute numbers of click and cost at current bid by using the normalized ratio. In practice, we are not only interested in the normalized ratio of click increase vs. cost increase, but more interested in the true numbers of clicks and eCPM\_cost that are viewed as important business metrics (such as revenue-per-mile (RPM), click yield (CY), {\it etc}).  Therefore, in the paper next, we use Eq.(\ref{EQ:click_cpm_cost}) as the optimization goal.

\vspace{-2mm}
\section{Concavity Changes Detection Model}
\label{sec:LGIM}

At first sight, it seems very easy to solve Eq.(\ref{EQ:click_cpm_cost}), since  we can do constrained optimization based on the estimations from Eq.(\ref{EQ:click_spend_cost}).  However, there are several major challenges we need to address in order to solve the optimization problem of Eq.(\ref{EQ:click_cpm_cost}).
Firstly, solving Eq.(\ref{EQ:click_cpm_cost})  is a discrete constraint optimization problem because click and spend functions are distributed over a range of bid prices unevenly (challenge C1). 
Secondly, solving Eq.(\ref{EQ:click_cpm_cost})  depends on the estimation of clicks and spend in  Eq.(\ref{EQ:click_spend_cost}). However, the estimations 
would be noisy and have missing values since eCPM cost distribution and win rate distributions are sparse and noisy (challenge C2). 
Finally, the proposed algorithm is required to be fast, accurate and robust with minor overhead (challenge C3). 

We propose the logistic growth inflection model to address the above challenges. %
In particular, we transform the discrete optimization problem into a continuous optimization problem by interpolating the unseen cost prices using a concave sigmoid type function at different cost prices (for challenge C1). We use the non-linear regression to fit the continuous curves with the purpose of removing the noises and missing values automatically by capturing the overall click distributions over the cost (for challenge C2).  We propose an effective binary search method to solve the constrained optimization problem with guaranteed optimal solution. Moreover, the estimated spend and click distributions are done offline as the pre-processing step (for challenge C3).   Essentially, the proposed logistic growth inflection model consists of three key steps which will be illustrated in detail in the following subsections. 

$\bullet$ Model $\frac{\Delta_\text{Click}} {\Delta_\text{eCPM\_Cost} }$  using logistic growth sigmoid function;

$\bullet$ Optimize model parameters with non-linear regression; 
 
$\bullet$ Solve the resultant constraint optimization problem using binary search in continuous domain.

\vspace{-2mm}
\subsection{Function Approximation}

Notice that number of $\text{Click}$ and $\text{eCPM\_Cost}$ are both functions of bid price, and therefore, $\text{Click}$ can be modeled as a function of $\text{eCPM\_Cost}$, {\it i.e.,}
\begin{eqnarray} 
\label{EQ:h}
\text{Click}  \triangleq  h(\text{eCPM\_Cost}),
\end{eqnarray}
where $h$ is a function that maps the $\text{eCPM\_Cost}$ to the number of $\text{Click}$.  Then we have, 
\begin{eqnarray}
\label{EQ:del}
  \frac{\Delta_\text{Click}} {\Delta_\text{eCPM\_Cost} }  = \frac{dh} {d \text{eCPM\_Cost}} = h'(\text{eCPM\_Cost}), 
\end{eqnarray}
where $h'$ is the first-order derivative of function $h(.)$.  Therefore Eq.(\ref{EQ:click_cpm_cost}) is transformed to solving the following problem, 
 \begin{eqnarray}
\label{EQ:del2}
& \max_\text{eCPM\_cost}   h'(\text{eCPM\_Cost}),  \nonumber \\
& s.t. \;\;\; \text{Spend(eCPM\_cost)} \leq \text{budget},
\end{eqnarray}
and its advantages over Eq.(\ref{EQ:click_cpm_cost}) can be clearly seen in the following discussions. 

\vspace{-2mm}
\subsubsection{Function requirement}

More importantly, there are several nice properties that function $h(.)$ of Eq.(\ref{EQ:h}) is required to have:  

\textcircled{1} If the cost is zero, then the click is zero.

\textcircled{2} If the cost is very high, then the click should reach to an upper limit (a.k.a asymptote mathematically).    

\textcircled{3} The function is monotonically increasing,  {\it i.e.,} the number of clicks at higher cost is always greater that at  lower  cost. Mathematically, $h(x+\epsilon) > h(x)$ for $\epsilon >0$.

\textcircled{4} If the cost is between zero and the upper limit, an S-curve function is desired due to the following reasons.  Generally, at lower cost, at first some minimum amount of cost increase will generate noticeable output click creation. Once the market passes the inflection point at which a change in the direction of curvature occurs, modest increases in the cost input generate disproportionate movements in output clicks. In other words, before inflection point, very small increases in the input cost generate modest increases in the output clicks. After inflection point, very large increases in the input cost generate only modest increases in the output before the "practical ceiling" where at some point, additional increases in the input cost generate no noticeable click increase at all.


\vspace{-2mm}
\subsubsection{Modeling using sigmoid-type function}

In order to satisfy the above properties, we use logistic growth sigmoid-type function to model $h(.)$.  For notation simplicity, let
$$x = \text{eCPM\_cost},  \;\;  h(x) = click,$$
\begin{eqnarray}
\label{EQ:h(x,y)}
 h(x) = \frac{s}{1+\exp^{-t x+p}} - q 
\end{eqnarray}
where $s, t, p, q \ge 0 $ are model parameters of $h(.)$ and 
$\frac{s}{1+\exp^p} - q = 0\;\; (s -q > 0). $
Essentially, this function is transformed based on standard sigmoid function 
$f(x) = \frac{1}{1+exp^{-wx}}$, which moves down by $\frac{q}{s}$ in y-direction, and moves right by $\frac{p}{t}$ in x-direction and stretches 
by $s$ in y-direction. Therefore, it maintains very similar functionality as standard sigmoid function, including S-curve type derivatives. Notice that 
\begin{eqnarray}
\label{EQ:h=0}
h(0) &=& 0, \;\;  \nonumber \\ 
\lim_{x \rightarrow \infty} h(x) &=& s - q,   \;\;  \nonumber \\
\frac{d h}{dx} &=& \frac{t  s \cdot \exp^{-tx+p}}{(1+\exp^{-tx +p})^2} \ge 0, 
\end{eqnarray}
therefore $h(x+\epsilon) > h(x)$ and properties (1--3) are all satisfied.  Further we have Lemma~\ref{lemma:s-curve} that states the S-curve property of $h(.)$. 

\subsubsection{Why not other functions?} We note that there are alternatives for modeling the relations between Click and eCPM\_cost, most of which, however, do not exhibit the S-curve property. 
For example, if we use \emph{power function} to model it, {\it i.e,}
$y = \alpha x^{\beta}, \;\; 0 < \beta \leq 1.$
This function will never saturate since $x \rightarrow \infty $, when $x \rightarrow \infty $. Another alternative is \emph{Michaelis-Menten function}, {\it i.e.,}
$y = \frac{\alpha x}{1+\beta  x}, \;\; \beta > 0 .$
Although this function has the added bonus of reaching the click saturation of $\frac{\beta}{\alpha}$, it does not exhibit the S-curve property correctly. Furthermore, 
if we use \emph{negative exponential families}, {\it i.e.,}
$y = \alpha(1 - \exp^{-\beta  x}),  \;\; \beta >0,$
the maximum click is only attained at saturation point $\alpha$ and there is no S-curve type property as well.  
From the above analysis, it is clearly that $h(.)$ of Eq.(\ref{EQ:h(x,y)}) is the best candidate for modeling the logistic growth relations, which is also further validated by the following Theorem~\ref{lemma:s-curve}.

 \begin{theorem}
\label{lemma:s-curve}
The function $h(.)$  of Eq.(\ref{EQ:h(x,y)}) has the S-type curve, which increases faster and faster before reaching to the peak values, and then increases slower and slower after that.  
\end{theorem}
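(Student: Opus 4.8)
The plan is to analyze the second derivative $h''(x)$ of the logistic-growth function in Eq.~(\ref{EQ:h(x,y)}) and show that it changes sign exactly once, from positive to negative, as $x$ increases over $[0,\infty)$; this is precisely the ``increases faster and faster, then slower and slower'' behavior asserted. First I would recall from Eq.~(\ref{EQ:h=0}) that
\[
h'(x) = \frac{t\,s\,e^{-tx+p}}{\bigl(1+e^{-tx+p}\bigr)^2} > 0,
\]
so $h$ is strictly increasing; this already handles the ``increasing'' part and shows the curve has no flat stretches. The shape claim is then entirely a statement about concavity.

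Next I would compute $h''(x)$ by differentiating $h'$. Writing $u = e^{-tx+p} > 0$ for brevity (so $du/dx = -t\,u$), one gets $h'(x) = t s\,u/(1+u)^2$ and, after the quotient/chain rule,
\[
h''(x) = t^2 s\,u\,\frac{u-1}{(1+u)^3}.
\]
Since $t^2 s\,u/(1+u)^3 > 0$ for all $x$, the sign of $h''(x)$ is the sign of $u - 1 = e^{-tx+p} - 1$. This is positive when $-tx+p > 0$, i.e.\ $x < p/t$, zero when $x = p/t$, and negative when $x > p/t$. Hence $h$ is strictly convex on $[0,p/t)$ and strictly concave on $(p/t,\infty)$, with a unique inflection point at $x^\star = p/t$. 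I would also note $x^\star = p/t \ge 0$ since $p,t \ge 0$, so the inflection point lies in the admissible cost range, and that at $x^\star$ the slope $h'(x^\star) = ts/4$ is the maximal slope (this is exactly the quantity maximized in Eq.~(\ref{EQ:del2}) in the unconstrained case), tying the lemma back to the optimization objective.

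Finally I would translate these analytic facts into the informal language of the statement: on $[0,x^\star)$ the derivative $h'$ is itself increasing (because $h'' > 0$ there), so each additional unit of cost buys a larger click increment than the previous one — ``faster and faster''; at $x^\star$ the marginal click gain peaks; on $(x^\star,\infty)$ the derivative $h'$ is decreasing, so marginal click gains shrink toward zero as $x\to\infty$ (consistent with $h'(x)\to 0$ and $h(x)\to s-q$ from Eq.~(\ref{EQ:h=0})) — ``slower and slower''. The only mildly delicate step is the bookkeeping in differentiating $h'$ to isolate the factor $(u-1)$ cleanly; everything else is a direct sign analysis. I do not anticipate a real obstacle, since the logistic function is engineered to have exactly this S-shape; the one thing to state carefully is that the degenerate cases $t=0$ or $s=0$ are excluded by the standing assumption $s-q>0$ together with $h(0)=0$ (which forces $s,t,p$ to be genuinely positive), so the inflection point $p/t$ is well defined.
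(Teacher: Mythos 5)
Your proposal is correct and follows essentially the same route as the paper's proof: compute $h''$, locate its unique zero, and check the sign on either side. The one substantive difference is in the outcome of the algebra: your (correct) computation gives $h''(x)=t^2 s\,u(u-1)/(1+u)^3$ with $u=e^{-tx+p}$, hence a single sign change at $u=1$, i.e.\ at $x^{\star}=p/t$ where the slope peaks at $ts/4$; the paper's displayed second derivative contains a slip and leads to the value $x^{*}=\frac{1}{t}\log(2-e^{p})$, which is not even well defined once $p\ge\log 2$. So your argument in fact repairs the paper's proof rather than merely reproducing it, and the inflection point used downstream in Eq.(\ref{EQ:x*}) should likewise read $x^{*}=p/t$.
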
 
\begin{proof}
This is verified by 
2nd-order derivative $\frac{d^2 h}{dx^2}$, {\it i.e.,}
$\frac{d^2 h}{dx^2} > 0 $ when $0 \leq x < \frac{1}{t}log(2 - \exp^p)$ , 
$\frac{d^2 h}{dx^2} = 0 $ when $ x =  \frac{1}{t}log(2 - \exp^p)$,  
and $\frac{d^2 h}{dx^2} < 0 $ when $ x > \frac{1}{t}log(2 - \exp^p)$.
This is easy to prove by setting $\frac{d^2 h}{dx^2} = 0$, {\it i.e.,}
\begin{eqnarray}
\frac{-t^2 (1 + \exp^{-tx + p} - 2 \exp^{-tx}) s \exp^{-tx}}{ (1+\exp^{-tx+p})^3}  = 0, 
\end{eqnarray}
and this gives the peak value,
 $x* = \frac{1}{t}log(2-\exp^p).$  
One can check when $ 0 \leq x < x^*$,  $\frac{d^2 h}{dx^2} > 0 $.
When $x >x^*$,  $\frac{d^2 h}{dx^2} < 0 $.
\end{proof}

Eq.(\ref{EQ:h(x,y)}) gives a very simple but effective solution to model the relations of click vs. eCPM\_cost given the observations. Moreover, this function enjoys the nice properties of the S-curve function. In particular, we use the logistic growth type function, which is the best-known example of S-curve behaviors.  Many growth processes, such as diffusion of innovations, population growth, {\it etc} exhibit varying speeds of exponential growth at first, then they hit an inflection point after maturing, running into competition (or scarcity), and finally the rate of growth decelerates. This exactly fits the modeling of clicks vs. eCPM\_cost in ad context.


\subsection{Parametric Learning Method} 
Concerning fitting the model parameter in Eq.(\ref{EQ:h(x,y)}), we have observed the true clicks ($y_j$) and eCPM\_cost ($x_j$) in different auctions $j$, {\it i.e.,} $\{x_j, y_j\}_{j=1}^{n}$. These observations are modeled 
using model parameters $\theta = [s, t, p, q]$ in Eq.(\ref{EQ:h(x,y)}). Generally, there is no closed-form solution for the best-fitting parameters given this non-linear regression problem, and we fit the model by approximating the observations from refining the parameters in successive iterations.  In particular, 
using the least square loss function, 
the sum of square errors should be minimized, {\it i.e.,}
\begin{eqnarray}
\label{EQ:S}
\min_{\theta} S = \min_{\theta} \sum_{j=1}^n r^2_j, \;\; r_j = h(x_j) - y_j,
\end{eqnarray}
where $r_j$ is the error term.
Then the minimum value of $S$ is achieved when the gradient $\frac{\partial S}{\partial \theta} = 0$, {\it i.e.,}
$
\frac{\partial S}{\partial \theta_s} = 2 \sum_j r_j \frac{\partial r_j}{ \partial \theta_s} = 0,
$
which is actually a function of \text{eCPM\_cost} and parameters $s, t, p, q$.  Usually, there is no closed-form solution.
After giving the initial guess, the parameters $\theta$ are refined iteratively in successive approximations. Let Jacobian matrix $J =[J_{jk}]
$ and $J \in \Re^{n \times K}$, $\frac{\partial r_j}{\theta_s} = -J_{js}$,  then
\begin{eqnarray}
\label{EQ:rj}
 r_j &=& y_j  - h(x_j; \theta)  \nonumber \\
 &=& \Big(y_j - h(x_j; \theta^t) \Big) +  \Big(h(x_j; \theta^t) - h(x_j; \theta)  \Big) \nonumber \\
&=&  \Delta {y_j} - \sum_{s=1}^n J_{js} \Delta \theta_s
\end{eqnarray}
Then we have
\begin{eqnarray}
\label{EQ:J_ijs}
\sum_{j=1}^n \sum_{k=1}^K J_{jk} J_{js}  \Delta {\theta_k} = \sum_{j=1}^n J_{js} \Delta y_j.  
\end{eqnarray}
This gives the basic updating equations used in Gauss-Newton method in matrix form, {\it i.e.,}
$(J^T J) \Delta \theta = J^T \Delta y.  $
The algorithm stops when the stop criterion is satisfied, {\it i.e.,}
$\frac{|S^{t+1} - S^t|}{|S^t|}  \leq \xi$,
which is set to \text{1e-5} in practice.

     


Now we are ready to solve Eq.(\ref{EQ:del2}) given the fitted model parameters  $\theta = [s, t, p, q]$ .  Notice that the optimal $eCPM\_cost$ is given by: 
\begin{eqnarray}
\label{EQ:ecpm_cost**}
\text{eCPM\_cost*} ={argmax}_x h'(x).  \;\;\;
 \end{eqnarray}
As is discussed before, we can obtain the optimal solution $x^*$
by setting the second order of derive to zero, {\it i.e,} 
\begin{eqnarray}
 \frac{d^2 h}{dx^2} = 0, \; \;\; 
  x^*  =  \frac{1}{t}log(2 - \exp^p).
  \label{EQ:x*}
 \end{eqnarray}

This is not done. We need to further check whether the spend exceeds the budget or not. In particular, if 
$
Spend(\text{eCPM\_cost*} ) \leq budget, 
$
then  $\text{eCPM\_cost*} = x^*$ is the optimal solution.  Otherwise, if the spend already exceeds the budget, we need to decrease \text{eCPM\_cost*} until spend is within the budget because $h(.)$ is a non-decreasing function.  In particular, we use a binary search algorithm to search for the highest \text{eCPM\_cost*} with the amount of spend at most equal to the budget, {\it i.e.,}
\begin{eqnarray}
\label{EQ:ecpmbudget}
\text{eCPM\_cost*} = \text{binary\_search}(0, x^*, \text{budget}).
\end{eqnarray}
In more detail, Algorithm ~\ref{alg:bst} summarizes the key steps in binary search to find the best cost price.  

\begin{algorithm}
\label{alg:bst}
 \caption{\small Binary search algorithm to find the maximum ecpm\_cost (Eq.\ref{EQ:ecpm_cost**}) given fixed budget. }
\begin{flushleft}
 {\bf Input:}   $x^*$, $budget$,   \\
{\bf Output:} $\text{eCPM\_cost}^*$   \\
\end{flushleft}
\begin{algorithmic} [1]
\STATE { min\_cost   $\leftarrow$ 0, max\_cost $\leftarrow$ x*. }
  \WHILE{min\_cost $\leq$  max\_cost}
      \STATE { mid\_cost $\leftarrow$  (min\_cost + max\_cost)/2 }.
      \IF  {spend(mid\_cost) $\approx$ budget according to Eq.(\ref{EQ:click_spend_cost})}
      	  \STATE {$ecpm\_cost^* \leftarrow mid\_cost$; }
      	  \RETURN   \text{ecpm\_cost*} 
      \ELSIF {spend(mid\_cost) $>$ budget}
           \STATE {max\_cost $\leftarrow$ mid\_cost-0.001; }
      \ELSE
            \STATE {min\_cost $\leftarrow$ mid\_cost+0.001; }
      \ENDIF
    \ENDWHILE
\end{algorithmic}
\label{alg:bst}
\end{algorithm}


\begin{figure*}
	\centering
	\begin{subfigure}[b]{0.33\textwidth}
	\includegraphics[height=1.2in, width=\linewidth]{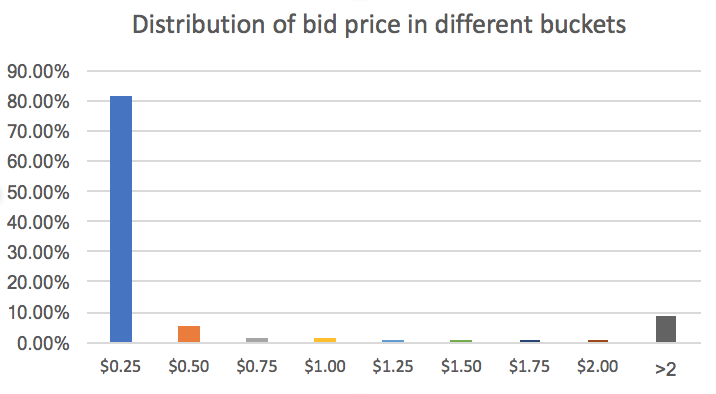}
	\caption{Bid distribution}
		\label{fig:bid}
	\end{subfigure}
	\begin{subfigure}[b]{0.33\textwidth}
	\centering
	\includegraphics[height=1.2in, width=\linewidth]{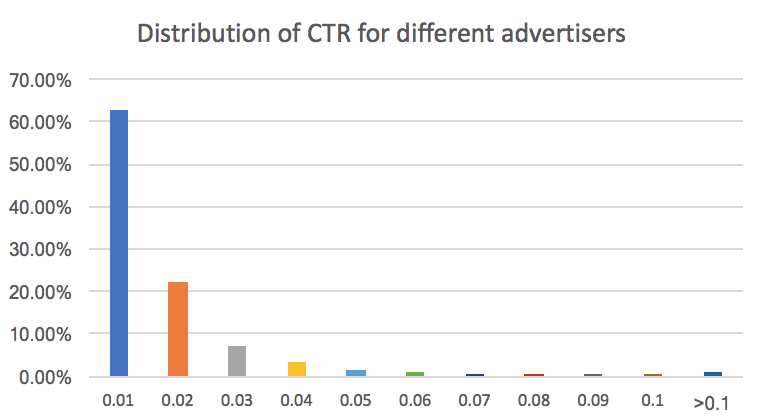}
		\caption{CTR distribution}
	\label{fig:ctr}
	\end{subfigure}
	\begin{subfigure}[b]{0.33\textwidth}		
	\centering
	\includegraphics[height=1.2in, width=\linewidth]{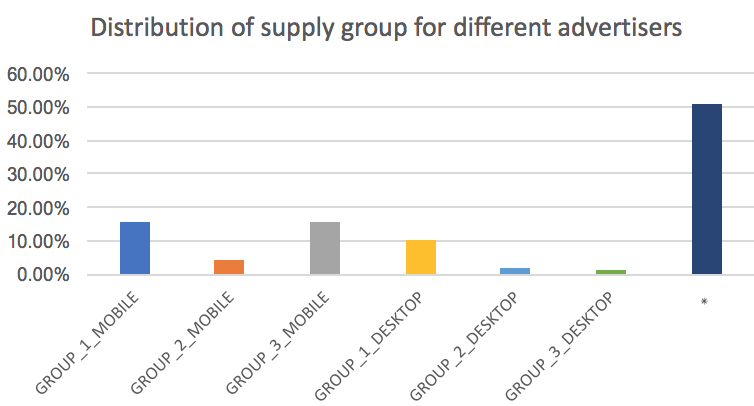}
	\caption{Supply group distribution}
	\label{fig:supply}
	\end{subfigure}		
	\caption{\small (a) distribution of bid price at different buckets; (b) distribution of CTR for advertisers at different buckets; (c) distribution of supply group 
	at different buckets.}
	\label{fig:insight}
\end{figure*}

\begin{figure*}
	\centering
	\begin{subfigure}[b]{0.33\textwidth}
	\includegraphics[height=1.5in, width=\linewidth]{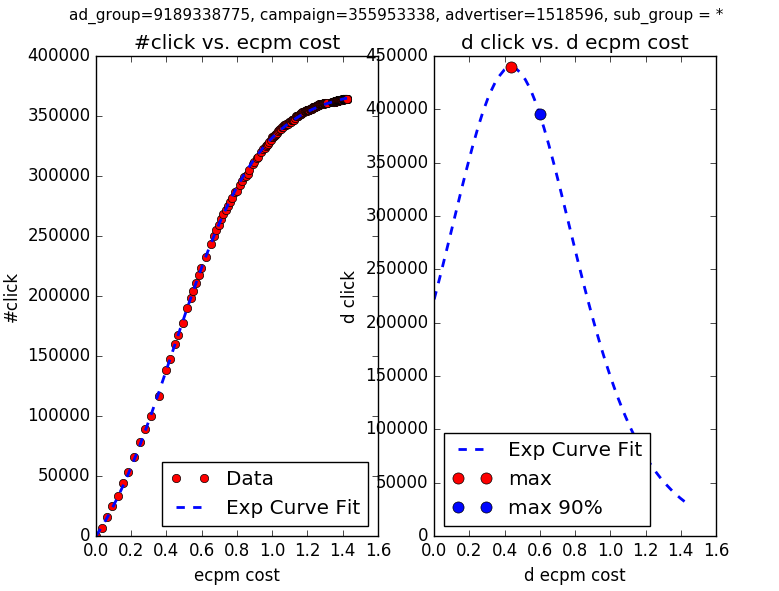}
	\caption{campaign 1}
		\label{fig:native-1}
	\end{subfigure}
	\begin{subfigure}[b]{0.33\textwidth}		
	\centering
	\includegraphics[height=1.5in, width=\linewidth]{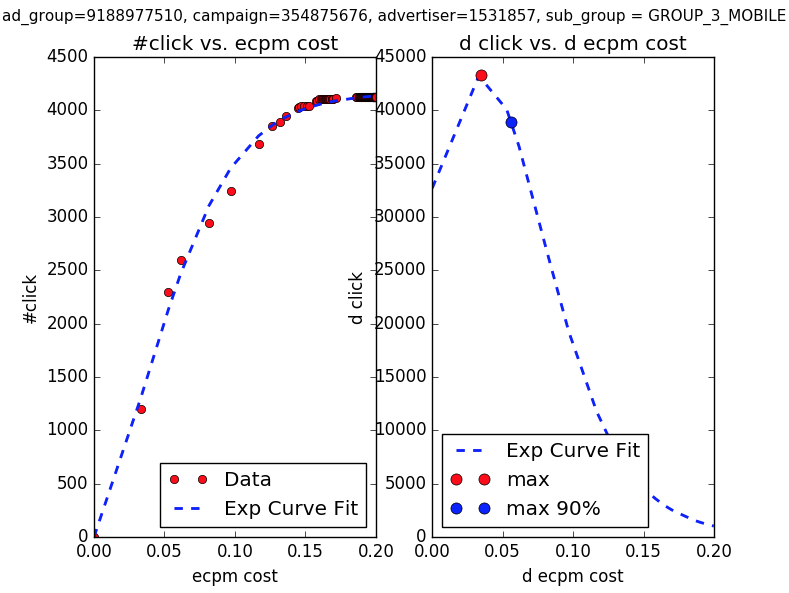}
	\caption{campaign 2}
	\label{fig:native-4}
	\end{subfigure}	
	\begin{subfigure}[b]{0.33\textwidth}
	\centering
	\includegraphics[height=1.5in, width=\linewidth]{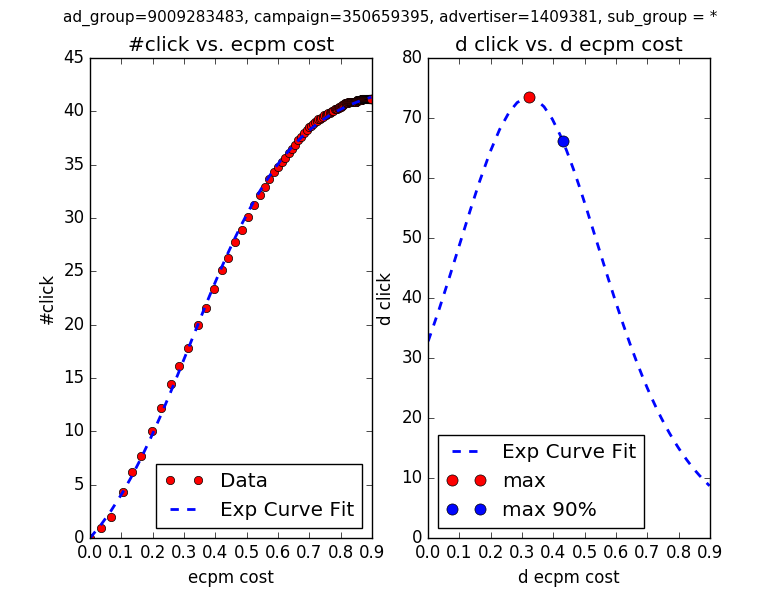}
		\caption{campaign 3}
	\label{fig:native-2}
	\end{subfigure}
	\begin{subfigure}[b]{0.33\textwidth}		
	\centering
	\includegraphics[height=1.5in, width=\linewidth]{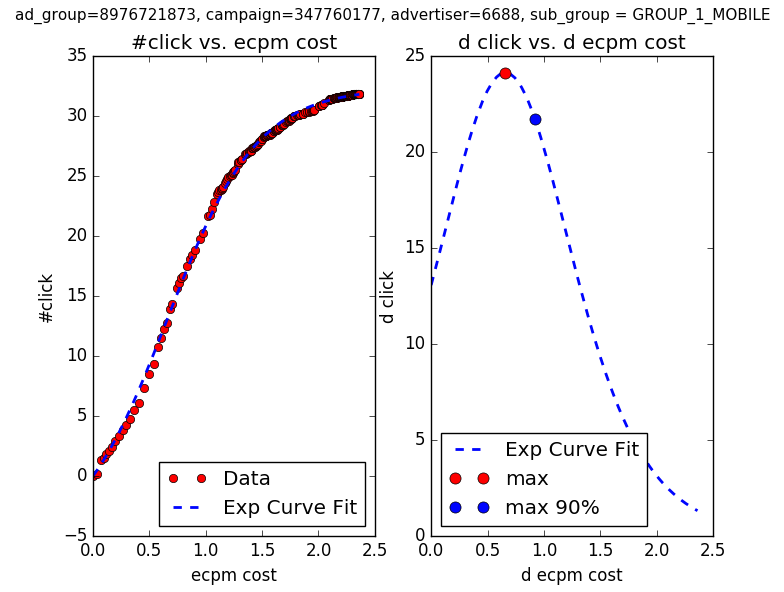}
	\caption{campaign 4}
	\label{fig:native-3}
	\end{subfigure}		
	\begin{subfigure}[b]{0.33\textwidth}
	\includegraphics[height=1.5in, width=\linewidth]{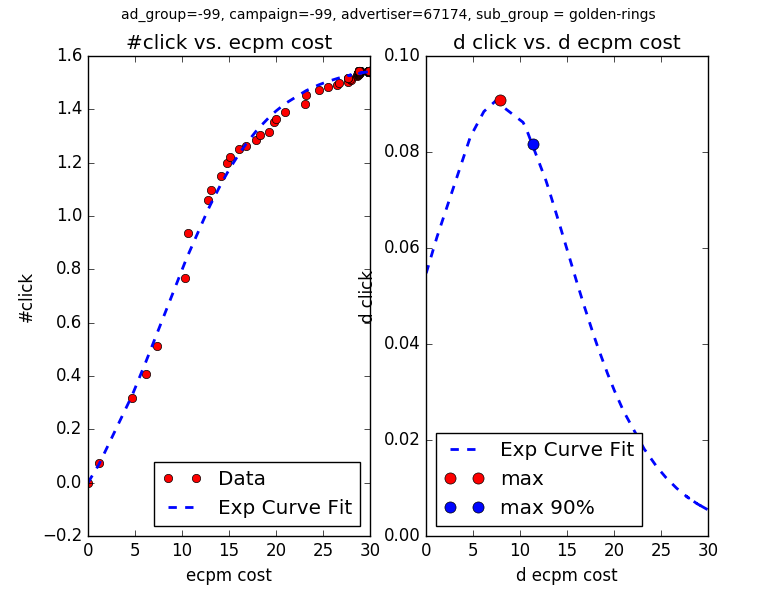}
	\caption{campaign 5}
		\label{fig:search-1}
	\end{subfigure}
	\begin{subfigure}[b]{0.33\textwidth}
	\centering
	\includegraphics[height=1.5in, width=\linewidth]{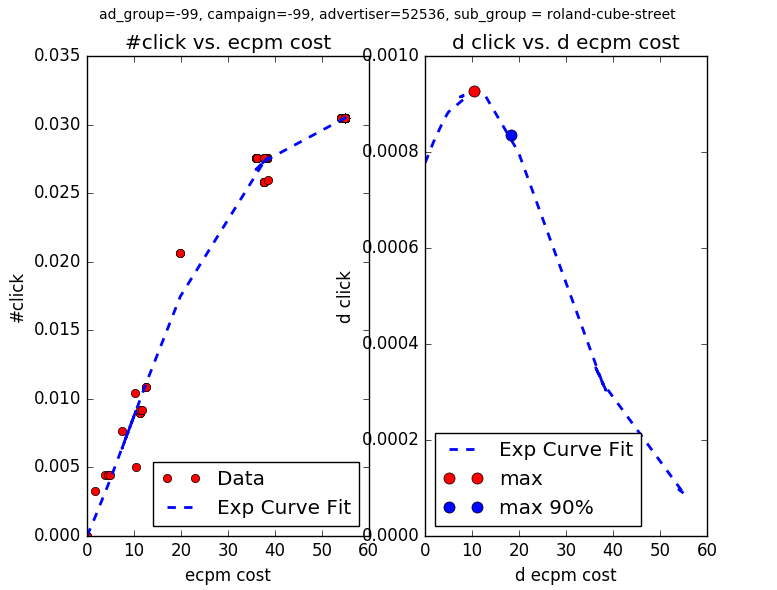}
		\caption{campaign 6}
	\label{fig:search-2}
	\end{subfigure}
	\caption{\small Number of Click vs. eCPM\_cost curves and the corresponding derivatives ($\frac{d \text{Click}}{d \text{eCPM\_cost}}$) on six campaigns.  In left panel of each subfigure,  we use the S-curve function (in dashed blue) to fit the observations (in red circle). In right panel of each figure, the maximum derivate (i.e., the inflection point) is obtained in red circle in derivative curve (in dashed blue). "Max 90\%" is the point that achieves 90\% of maximum derivatives but goes beyond the inflective point.}
	\label{fig:native-search-3}
\end{figure*}

\section{Evaluations}
\label{sec:exp}



\subsection{Offline Evaluation}

{\bf Adverting campaign logs} We collect advertising campaign log data on a major advertising platform in
U.S after removing inactive campaigns. We take a sample of 15\% of all bidding records in one-week, totally 135,409 native ad campaign 
profiles in offline evaluation.  All these campaigns focus on clicks for displaying advertisement that well matches the form and function of the publisher platform upon which it appears.
The campaigns use cost-per-click (CPC) bidding,  and the advertiser needs to pay only when someone actually clicks on the ad after visiting publisher site. In  auctions the advertising campaigns win the participated inventory based on GSP.  
The campaigns can win impressions from 7 supply groups, including desktop and mobile platforms with 3 tiers of 
impression supplies denoting different publisher groups, and one unified supply group for any platform of any publisher. We show the distribution of bid price,
 CTR of advertisers, and supply groups in Fig.\ref{fig:insight}.

{\bf Correctness of solution}
 Fig.\ref{fig:native-search-3} plots six examples of number of Click vs. eCPM\_cost curves and the corresponding derivatives ($\frac{d \text{Click}}{d \text{eCPM\_cost}}$) using the advertising campaign dataset. One can easily find the inflection point using the proposed logistic growth function no matter whether there exist noisy or missing values in observations. %
 Essentially, the outliers and missing values are smoothed out by the logistic growth function, making the result more robust and reliable.   A naive method is to discover the inflection point from directly computing the gradient for each observation before obtaining the point with the largest derivative of click increase over eCPM\_cost increase. However, the major drawback of this method is  that one may miss the true optimal solution due to the noisy and sparse observations.  Fig.\ref{fig:fail-case} shows an example of failure case, where it is hard to tell the optimal bid from the noisy first-order derivatives. %
Let the $cost\_{naive}$ be the optimal eCPM\_cost using naive method, and let $cost\_{cf}$ be the optimal eCPM\_cost of our method. We define the {\bf derivative increase ratio} (DiffR) measurement to quantify the difference of them, {\it i.e.,}
\begin{eqnarray}
DiffR =\frac{ \frac{d \text{Click}}{d \text{eCPM\_cost}}|_{cost\_naive}  - \frac{d \text{Click}}{d \text{eCPM\_cost}}|_{cost\_{cf}}  }
{
\frac{d \text{Click}}{d \text{eCPM\_cost}}|_{cost\_naive} .
}
\end{eqnarray}
The mean of DiffR is found to be +24.56\% by performing experiments on our advertising campaign dataset. 



{\small
\begin{table}
\small
\caption{\small MAPE, RMSE of predicted \emph{click} using 6 methods: (1) NNS, (2)  LI,
(3) Power, (4) MM,   (5) Exp, (6) Sigmoid. The smaller, the better.
}
\begin{center}
\begin{tabular}{ c|c|c}
\hline
\hline
Methods & MAPE & RMSE \\
\hline
Nearest Neighbor Search (NNS) & 0.2619  & 0.2931 \\
 Linear interpolation (LI) &   0.2378 & 0.2547\\
 Power function (Power) & 0.2406  & 0.2482 \\
Michaelis-Menten (MM)&   0.2279 & 0.2503\\
Negative exponential (Exp) & 0.2028  & 0.2380 \\
Sigmoid (proposed) &   0.1303  & 0.1572\\
\hline
\hline
\end{tabular}
\end{center}
\label{tbl:click-error}
\end{table}
}
{\small
\begin{table}
\small
\caption{\small MAPE, RMSE of predicted \emph{spend} using 6 methods: (1) NNS, (2)  LI,
(3) Power, (4) MM,   (5) Exp, (6) Sigmoid.
}
\begin{center}
\begin{tabular}{ c|c|c}
\hline
\hline
Methods & MAPE & RMSE \\
\hline
Nearest Neighbor Search (NNS) & 0.2342  & 0.2749 \\
 Linear interpolation (LI) &   0.2232 & 0.2565\\
 Power function (Power) & 0.2096  & 0.2134 \\
Michaelis-Menten (MM)&   0.1980 & 0.2012\\
Negative exponential (Exp) & 0.1797  & 0.1978 \\
Sigmoid (proposed) &   0.1269 & 0.1634\\
\hline
\hline
\end{tabular}
\end{center}
\label{tbl:spend-error}
\end{table}
}
\begin{figure}
	\centering
	\includegraphics[height=1.0in, width=0.5\linewidth]{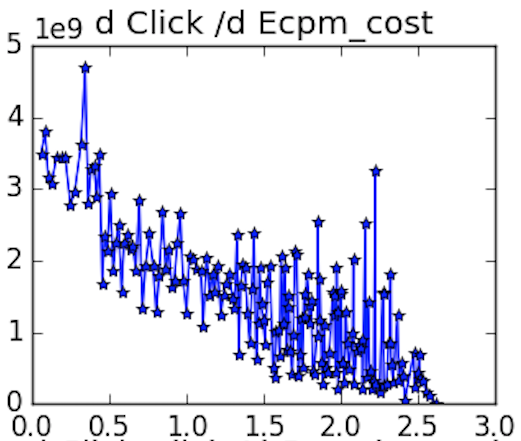}
	\caption{\small failure case using naive method for bid recommendation}
	\label{fig:fail-case}
\end{figure}

\subsubsection{Sigmoid curve fitting baseline comparisons}


In order to evaluate how accurate the fitted curves, we pull out the current true bid price, true clicks and spend\footnote{The amount of spend is computed using the following equation:
$Spend = Click \times eCPM\_cost \times \frac{1}{1000 \times CTR}.$
},  and then compare them against the predicted clicks and spend at the same bid price using the proposed sigmoid-type curve fitting technology.  We adopt the mean of absolute percentage error (MAPE) and root mean square error (RMSE)  as the measurement, {\it i.e.,}  
{\small
\begin{eqnarray}
MAPE =  \sum_{i=1}^n \frac{1}{n}\frac { |{y}_i-  \hat{y_i}| }{y_i},    \;\;\; 
RMSE = \sqrt{\frac{1}{n}\sum_{i=1}^n{(\hat{y_i}-y_i)}^2},
\label{EQ:ape}
\end{eqnarray}
}
where $y_i$ is the true click or true spend, and $\hat{y_i}$ is the forecasted number. The smaller, the better. 
The baseline methods we compared include:

$\bullet$ {\bf Nearest Neighbor Search (NNS).} 
For the observed bid $b$,  if $b'_i$ is the nearest neighbor of bid $b_i$ in history, then we retrieve the corresponding click and spend for bid $b'_i$ as the predicted values .  
 
$\bullet$  {\bf Linear interpolation (LI)} 
The predicted $\hat{click(b)}$ at bid $b$ is the linear interpolation based on $b_i$ and $b_j$, {\it i.e.,}
$\hat{click(b)}= click(b_i) + (b- b_i)\frac{click(b_j) - click(b_i)}{b_j - b_i}, $
where $click(b_i)$ is the observed click at bid $b_i$ and $click(b_j)$ is the observed click at bid $b_j$.  The amount of spend can be similarly computed.

$\bullet$ {\bf Power function} The function\footnote{\url{http://wmueller.com/precalculus/families/1\_41.html}} is $y = \alpha x^{\beta}, \;\; 0 < \beta \leq 1,$  where $x$ denotes eCPM\_cost, $y$ denotes number of clicks. 

$\bullet$ {\bf Michaelis-Menten function}  The function\footnote{\url{https://en.wikipedia.org/wiki/Michaelis\%E2\%80\%93Menten\_kinetics}} is $y = \frac{\alpha x}{1+\beta  x}, \;\; \beta > 0 ,$  where $x$ denotes eCPM\_cost, $y$ denotes number of clicks. 

$\bullet$ {\bf Negative exponential function} The function\footnote{\url{https://en.wikipedia.org/wiki/Exponential\_distribution}} is $y = \alpha(1 - \exp^{-\beta  x}),  \;\; \beta >0 ,$  where $x$ denotes eCPM\_cost, $y$ denotes number of clicks.  

The parameters $\alpha, \beta$ for the power function, Michaelis-Menten function and negative exponential function are learned from observations by minimizing the least square errors as is shown in \S 3.2.   Tables~\ref{tbl:click-error}, ~\ref{tbl:spend-error} show the performance comparisons agains these baselines. Clearly, the proposed sigmoid function gives better results in terms of both MAE and RSME. 
As in other parametric learning method, the best fitted parameters depend on auction and campaigns. The superiority of the inflection model is reflected at aggregation-level  by considering different campaigns and auctions. 

\subsubsection{Bid strategy comparisons}

We compared four different bid optimization strategies:

$\bullet$ Strategy 0 (no-opt): is the current bid without any bid optimization. 

$\bullet$ Strategy 1 (mc): is the bid optimization strategy that achieves the maximum number of clicks from the advertisers' view,  and the optimal bid is 
$bid_{MC}^* = argmax_\text{bid}  {\text{Click(bid)}}$ within the budget constraint 
$ \text{Spend(bid)} \leq \text{budget} .$

$\bullet$ Strategy 2 (mc90): is the strategy that optimal bid $bid_{MC90}^*$  achieves 90\% of maximum clicks. 

$\bullet$ Strategy 3 (ip): is the proposed inflection point approach defined in Eq.(\ref{EQ:click_cpm_cost}); 

$\bullet$ Strategy 4 (ip90): is the strategy that optimal bid  $bid_{IP90}^*$ achieves 90\% of  $\frac{d \text{Click}}{d \text{eCPM\_cost}}$ but goes beyond the inflection point. %

$\bullet$ Strategy 5 (cpa): is the strategy that optimal bid  $bid_{cpa}^*$ achieves the maximum number of conversions.  

$\bullet$ Strategy 6 (roi): is the strategy that optimal bid  $bid_{roi}^*$ achieves the maximum values of returns on investment~\cite{Perlich:2012:BOI:2339530.2339655}. 
Notice that oCPC strategy~\cite{Zhu:2017:OCP:3097983.3098134} is considered for increasing gross merchandise volume (GMV) in Taobao platform. However, different from e-commerce website, in our advertising platform for content media, the optimization of GMV cannot be applied.

{\bf Measurement}  We introduce several measurement to evaluate the performance of different bid optimization strategies.  RPM is indicator for advertising revenue for per thousand impressions, PPC is the indicator for pay per click for advertisers, ROI is to measure advertisers' return on investment. We also define Click Yield Ratio (CYR) as CYR = $\frac{d \text{Click}}{d \text{eCPM\_cost}}$ which measures the click increase over the spend.  We show the comparison results using different strategies against the baseline (strategy 0) as follows. 

\begin{tabular}{ p{1.5cm}|c|c|c|c}
\hline
\hline
Strategy & RPM & PPC & ROI & CYR  \\
\hline
1 (mc) & 29.17\%  & 32.16\%  &10.76\% & 18.67\%\\
2 (mc90) &   24.90\% & 26.45\% & 12.12\%&  16.25\%\\
3 (ip) & 23.57\% & 24.16\% & 16.34\%& 26.43\%\\
4 (ip90) &   20.19\% & 19.46\% &14.36\% & 24.58\%\\
5 (cpa) & 27.46\%  & 29.78\%  &11.13\%& 17.36\%\\
6 (roi) &   1.78\%  & 2.34\% &20.15\%& 3.21\% \\
\hline
\hline
\end{tabular}

We observe that strategy 6 (roi) focuses on optimizing advertiser's ROI and cannot ensure better RPM and PPC, which encourages the advertisers to bid at a much lower price to achieve the best ROI leading to insufficient click increase without spending the budget. Strategy 1, 2 and 5 encourage advertisers to spend as much as they can, in order to achieve the maximum number of clicks (or conversions), which generally require to increase the bidding approximately 20\% but end with limited ROI increase.   Proposed inflection point strategy (i.e., strategy 3 and 4) stands in between ROI strategy and maximum click (or conversion) strategy, which encourages the increase of bidding price to inflection point within the budget,  leading to much more ROI and click yield increase.

{\bf Campaign results} Besides the overall performance, we also show the result of particular campaigns (demonstrated in Fig.\ref{fig:native-search-3}) using proposed strategy 3.  The results over six campaigns (two from larger spender group, two from medium spender group and two from small spender group) are shown below:

\begin{tabular}{ c|c|c|c|c|c|c}
\hline
\hline
Campaign & 1 & 2 & 3 & 4 & 5 & 6 \\
\hline
ROI &16.01\% &14.13\% & 17.38\% & 21.07\% & 25.74\% & 30.05\% \\
CYR & 60.74\% & 22.88\% & 12.65\% & 13.58\% & 37.49\% & 80.01\% \\
\hline
\hline
\end{tabular}

We observe that for these demonstrated campaigns,  CYR and ROI  increase quite a lot due to the adoption of the recommended bid using the proposed model. In particular, for the small spender campaigns, we have more opportunity to encourage advertisers to increase the bid before spending all the budget to increase CYR and ROI. Essentially, the winning of high quality opportunities using the new bidding strategy contributes the improvement of ROI and CYR.  We notice there exist some campaigns whose ROI increases but CYR decreases (an example is shown below):

{\small
\begin{tabular}{ c|c|c}
\hline
\hline
Campaign & ROI & CYR  \\
\hline
Campaign 1 &12.38\% & -15.48\%\\
\hline
\hline
\end{tabular}
}

The reason is due to the budget limit, the campaign cannot reach the maximum CYR, but instead it needs to decrease the bidding price while achieving reasonable ROI. In our scenario, in practice, it is also practical to choose the inflection point that does not achieve the exact maximum of first-order derivatives (or clicks), but to some level, e.g. 90\% of maximum derivative, where the bid price becomes flatten out (a.k.a ``marketing clearing price''). Statistically, it will be more stable to fluctuations. In addition, it can handle the case when the maximum derivative is obtained at bid = 0.

\subsection{Online A/B test}


We apply A/B test to the ad traffic using two scenarios (1) without any bid optimization using 5\% traffic with default behavior; (2) adopt the infection point as the recommended bid, 5\% traffic. 
In particular, we look at the \emph{bid increase ratio},  \emph{click increase ratio} and \emph{ROI increase ratio}. 
Bid increase ratio (BIR) measures the percentage of bid increase from current bid ($bid^i_{cur}$ for ad $i$) to recommended bid ($bid^i_{IP}$ for ad $i$) using inflection model,  {\it i.e.,}
$
BIR = \sum_{i=1}^n \frac{1}{n}\frac{bid^i_{IP} - bid^i_{cur}}{bid^i_{cur}}. 
$
Click increase ratio (CIR) and ROI increase ratio (RIR) measure the percentage of click and ROI increase from current bid to recommended bid as defined in BIR,  respectively. 
The results are shown below (also in Fig.\ref{fig:bid_dist}): 

{\small
\begin{tabular}{ c|c|c|c}
\hline
\hline
Measurement & BIR& CIR & RIR \\
\hline
Performance liftup &+15.37\% & +30.24\% & +14.50\%\\
\hline
\hline
\end{tabular}
}

\begin{figure}
	\centering
	\includegraphics[height=1.2in, width=0.65\linewidth]{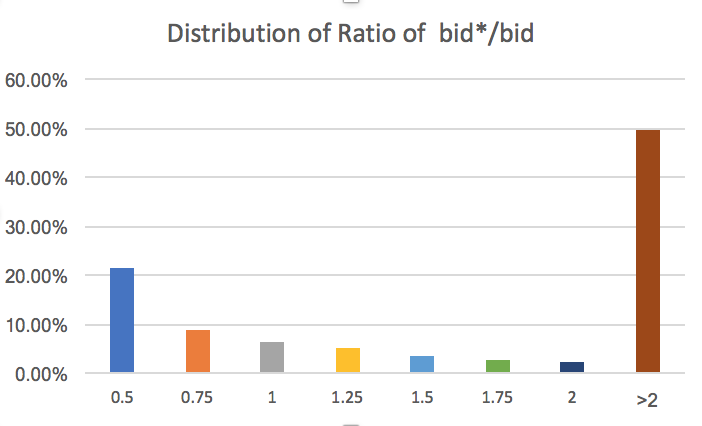}
	\caption{\small Distribution of ratio of optimal $bid^*$ over $bid$. x-axis: ratio of $bid^*$ over $bid$, y-axis: percentage.}
	\label{fig:bid_dist}
\end{figure}
In practice, advertising campaigns may focus on different objective types (e.g., clicks, page-views, conversions, user' experience), which recalls different bidding strategy. This work 
can be easily extended for obtaining inflection point  upon other models.

\vspace{-3mm}
\section{Related Works}
\label{sec:rl}



{\bf Bid recommendation and pricing} In ad auctions,  GSP is the most widely used pricing mechanism\footnote{VCG (Vickrey-Clark-Groves) and  Generalized First Price (GFP).
} due to the approximate truth telling\footnote{However, the market cannot have an equilibrium which may cause the auction oscillation and revenue loss according to the auction theory.} for ad auctions which involves the promotion of keywords by increasing the clicks and impressions for CPC (or CPM) campaigns. 
For real-time bidding (RTB), it has been discovered~\cite{Zhang:2014:ORB:2623330.2623633} that there exist non-linear relations between the optimal bid and ad impressions. Bid optimization and inventory scoring has been widely adopted in targeting ads using supervised learning~\cite{Perlich:2012:BOI:2339530.2339655},  second price auction theory\footnote{\url{https://en.wikipedia.org/wiki/Auction_theory}}, and  linear programming (LP) primal-dual formulation for CPI optimization~\cite{Chen:2011:RBA:2020408.2020604}, {\it etc}.  To allocate the budget for advertisers to achieve the maximum returns, different bidding strategies have been proposed, including
differentiating CTR~\cite{Colini-Baldeschi:2015:MKS:2852252.2818357} in sponsored search, user engagement optimization~\cite{Geyik:2016:JOM:2939672.2939724}, profit maximization at DSP~\cite{2017arXiv170601614G},   dynamically allocating impressions~\cite{DBLP:journals/corr/CaiRZMWYG17}, feedback-control for  risk-aware bidding~\cite{DBLP:journals/corr/ZhangZRRLW17}~\cite{Zhang:2016:FCR:2835776.2835843},  matching of bid and traffic quality~\cite{Zhu:2017:OCP:3097983.3098134} and ~\cite{DBLP:conf/cikm/LinCWC16}, 
 ~\cite{DBLP:conf/aaai/XuSMLQL16}~\cite{Lang:2012:HFE:2187836.2187887},~\cite{2015arXiv151108409F},  {\it etc}. 

 
 Regarding the price design, many pricing methods are proposed to meet the requirement of different application context, including reserve prices design in single item auctions~\cite{DBLP:conf/www/LemePV16},  mixture bidder of GSP and Vickrey-Clarke-Groves (VCG)~\cite{Bachrach:2016:MDM:2872427.2882983}, algorithmic/dynamic pricing~\cite{Chen:2016:EAA:2872427.2883089}, {\it etc}. Unlike real-time bidding, this paper considers the bid decision as an optimization problem by collecting complete campaign information in marketplace  from both demand side and supply side, and each segment of ad volumes are more accurately estimated and optimized to satisfy the advertisers' demands. 
Moreover, the method is designed for both native and search ad campaigns to drive the profit 
from optimally bidding on ad impressions.

{\bf Bid Landscape}
Bid landscape forecasting models the ad market price distributions and winning rate for auctions~\cite{Shah:2017:PES:3097983.3098041} of specific ad inventory given each specific bid price.  Different machine learning models have been proposed to estimate the winning price distributions, for example, 
mixture model~\cite{Wu:2015:PWP:2783258.2783276} for RTB at DSP side,  gradient boosting decision trees~\cite{Cui:2011:BLF:2020408.2020454} on targeting attributes, 
decision tree bid landscape model~\cite{Wang:2016:FBL:3088565.3088573},  Dirac conditioned distribution~\cite{Chapelle:2014:SSR:2699158.2532128}, {\it etc}.  %
In addition, to address the problem of censored data problem~\cite{Zhang:2016:BGD:2939672.2939713} of winning impressions,  Amin {\it et al.}~\cite{DBLP:journals/corr/abs-1210-4847} 
adopted bid-aware gradient descents (BGD) method to %
achieve unbiased learning in tackling strong bias towards the winning impressions. One major drawback of current bid landscape model is that they may not handle the real-world dynamic bid distribution very well due to the divergent and biased bid distributions in real-world settings. In this work, instead of only using the winning price distributions, our bid landscape model~\cite{DBLP:conf/www/GaoKLBY18},~\cite{DBLP:conf/www/KongSY18}, ~\cite{DBLP:conf/www/KongFSY18} is built using both win rate and eCPM\_cost distributions over the bid price from complete market observations where losing price in auctions are properly estimated using GSP, therefore we can perform forecasting more accurately, which largely reduces the model bias and provides accurate price prediction and recommendations.  
The short version of this work appeared in ~\cite{DBLP:conf/www/KongSY18a}.

\section{Conclusion}
\label{sec:conclude}

This paper presents a novel way for bid recommendation using concavity changes, which we believe is an important piece that powers the bid landscape for ad campaign planning and forecasting.  
Rigorous theoretical analysis and extensive experiments on real-world ad campaigns demonstrate the advantages 
over baselines.

{
\bibliographystyle{ACM-Reference-Format}
\bibliography{inflection} 
}
\end{document}